\documentclass[letterpaper]{article} %
\usepackage{aaai24}  %
\usepackage{times}  %
\usepackage{helvet}  %
\usepackage{courier}  %
\usepackage[hyphens]{url}  %
\usepackage{graphicx} %
\urlstyle{rm} %
\usepackage{natbib}  %
\usepackage{caption} %
\frenchspacing  %
\setlength{\pdfpagewidth}{8.5in} %
\setlength{\pdfpageheight}{11in} %

\pdfinfo{
	/TemplateVersion (2024.1)
}

\setcounter{secnumdepth}{1} %

\title{On the Outcome Equivalence of Extensive-Form \\
and Behavioral Correlated Equilibria}
\affiliations{
    \textsuperscript{\rm 1}Computer Science Department, Carnegie Mellon University\\
    \textsuperscript{\rm 2}Strategy Robot, Inc.\\
    \textsuperscript{\rm 3}Strategic Machine, Inc.\\
    \textsuperscript{\rm 4}Optimized Markets, Inc. \\
    \{bhzhang, sandholm\}@cs.cmu.edu \\
}
\author{
    Brian Hu Zhang\textsuperscript{\rm 1},
    Tuomas Sandholm\textsuperscript{\rm 1,2,3,4}
}

\makeatletter
\newcommand{\dontusepackage}[2][]{%
	\@namedef{ver@#2.sty}{9999/12/31}%
	\@namedef{opt@#2.sty}{#1}}
\makeatother

\dontusepackage{algorithmic}

\usepackage{microtype}
\usepackage{graphicx}
\usepackage{booktabs} %
\usepackage{multirow}
\usepackage{enumitem}
\usepackage{aliascnt}
\usepackage[table]{xcolor}

\usepackage{algorithm}
\usepackage[noend]{algpseudocode}

\usepackage{amsmath}
\usepackage{amssymb}
\usepackage{mathtools}
\usepackage{amsthm}
\usepackage[tight-spacing=true]{siunitx}

\usepackage[capitalize,noabbrev]{cleveref}

\theoremstyle{plain}
\newtheorem{theorem}{Theorem}[section]

\newtheorem{lemma}[theorem]{Lemma}
\newtheorem{corollary}[theorem]{Corollary}
\theoremstyle{definition}
\newtheorem{definition}[theorem]{Definition}

\theoremstyle{remark}

\newtheorem{example}[theorem]{Example}

\makeatletter
\let\c@algorithm\relax
\makeatother
\newaliascnt{algorithm}{theorem}

\usepackage{amssymb}
\usepackage{amsmath}
\usepackage{mathtools}
\usepackage{physics}
\usepackage{xspace}
\usepackage{bm}
\usepackage{enumitem}
\usepackage{tikz}
\usepackage{forest}
\usepackage{tikz-qtree}
\usepackage{linegoal}
\usepackage{natbib}
\let\cite\citep
\usepackage{autonum} %
\makeatletter
\autonum@generatePatchedReferenceCSL{Cref}
\autonum@generatePatchedReferenceCSL{cref}
\makeatother

\newcommand{\delimit}[3]{\newcommand{#1}[1]{\left#2##1\right#3}}
\delimit \ceil \lceil \rceil
\delimit \floor \lfloor \rfloor

\DeclareMathOperator*{\argmax}{argmax}
\DeclareMathOperator*{\E}{\mathbb E}

\let\op\operatorname
\let\eps\varepsilon
\let\mc\mathcal

\newcommand{\R}{\mathbb R}

\renewcommand{\vec}{\bm}

\newcommand{\eg}{{\em e.g.}\xspace}

\let\Root\varnothing

\newcommand{\poly}{\op{poly}}

\newcommand{\pone}{{\ensuremath{\blacktriangle}}\xspace}
\newcommand{\ptwo}{{\ensuremath{\blacktriangledown}}\xspace}
\definecolor{p1color}{RGB}{31,119,180}
\definecolor{p2color}{RGB}{255,127,14}
\definecolor{p3color}{RGB}{44,160,44}
\definecolor{p4color}{RGB}{214,39,40}

\newcommand{\commentsymbol}{\it\color{gray}$\triangleright$~}
\algrenewcommand\algorithmiccomment[1]{\hfill{\commentsymbol#1}}

\begin{document}
\maketitle
\begin{abstract}
We investigate two notions of correlated equilibrium for extensive-form games: extensive-form correlated equilibrium (EFCE) and behavioral correlated equilibrium (BCE). We show that the two are outcome-equivalent, in the sense that every outcome distribution achievable under one notion is achievable under the other. Our result implies, to our knowledge, the first polynomial-time algorithm for computing a BCE.
\end{abstract}
\section{Introduction}

Computing a Nash equilibrium is hard in general-sum games, even for normal-form games with two players~\cite{Chen09:Settling}. Further, Nash equilibrium assumes that the players are playing {\em independently}, which may not hold in practice---players may have access to a shared source of randomness, or to a mediator that allows them to correlate their behavior. These concerns motivate the definition of notions of {\em correlation} in games. 

A {\em normal-form correlated equilibrium} (NFCE)~\cite{Aumann74:Subjectivity} is a distribution over strategy profiles from which a player, after receiving a recommended strategy from this distribution, has no incentive to disobey that recommendation. This notion, although reasonable in normal-form games, is unsuitable for extensive-form games, for at least two reasons: first, no polynomial-time algorithm is known for computing a normal-form correlated equilibrium in an extensive-form game; second, a player seeking a profitable deviation from an NFCE can condition its play on the {\em entire game strategy} recommended by the mediator. In large games, this is not only computationally difficult but also hard to justify. Therefore, several notions of correlation have emerged for extensive-form games, as reasonable generalizations of the NFCE to extensive-form games. 

In this paper, we focus on two such notions: the {\em extensive-form correlated equilibrium} (EFCE)~\cite{Stengel08:Extensive} and the {\em behavioral correlated equilibrium} (BCE)\footnote{This notion was independently defined by the two papers cited; \citet{Zhang23:Simple} uses the name {\em forgiving correlated equilibrium}. }~\cite{Morrill21:Efficient,Zhang23:Simple}. In both notions, a strategy profile is first sampled from a known distribution. A player, upon reaching any of its decision points, observes only the {\em local} recommendation given by the strategy profile at that decision point. The distribution is considered an equilibrium if no player has incentive to disobey any recommendations.

The two notions differ in how they treat players who have disobeyed a past recommendation. In EFCE, a player who deviates from a recommendation receives no further recommendations from the mediator. In BCE, a player who disobeys a recommendation {\em continues to receive recommendations, and must be incentivized to follow those  recommendations even though it has deviated in the past}. 
These conditions would seemingly make BCE a stronger notion than EFCE: a deviating player both gets more information (in the form of extra recommendations) and has a stronger incentive constraint (they must be incentivized to obey the extra recommendations). Indeed, \citet{Morrill21:Efficient} show an explicit example (which we discuss in \Cref{se:examples}) of a BCE that is not an EFCE. 

There are several known techniques for computing EFCEs and BCEs. \citet{Jiang11:Polynomial} developed an exact polynomial-time algorithm that finds an EFCE. \citet{Celli20:No} developed polynomial-time no-regret dynamics that converge to EFCE at rate $\poly(|H|, 1/\eps)$ where $|H|$ is the number of nodes in the game tree. The main technique for computing BCE is no-regret learning. \citet{Morrill21:Efficient} and \citet{Zhang23:Simple} independently developed very similar algorithms for computing BCEs via no-regret learning. Both of their algorithms, however, take time $\poly(b^d, 1/\eps)$ where $b$ is the branching factor and $d$ is the depth---when $|H| \ll b^d$, this is exponential in the size of the game\footnote{\citet{Zhang23:Simple} states their algorithm as having polynomial runtime, because their paper assumes uniform depth and branching factor (so that $|H| = \Theta(b^d)$).}. The discrepancy between these bounds has led \citet{Song22:Sample} to define the $K$-EFCE, which interpolates between EFCE $(K=1)$ and BCE $(K=d)$ by allowing a deviating player $K$ deviations before it stops receiving recomendations. They develop no-regret learning dynamics with convergence rate $\poly(|H|, b^K, 1/\eps)$, thus matching the known results for EFCE and BCE. To our knowledge, finding a BCE in time $\poly(|H|, 1/\eps)$ was an open problem.

Our main result is that, at least in some sense, the distinctions between EFCE and BCE are insignificant. More formally, we show that {\em every EFCE can be transformed into a BCE that achieves the same outcome distribution---that is, the same distribution over terminal nodes---and moreover that there is a polynomial-time algorithm for implementing such a transformation}. Our result implies, to our knowledge, the first polynomial-time algorithm for computing a BCE in an extensive-form game.

\section{Preliminaries}
We now introduce the notions necessary for this paper.
\subsection{Extensive-Form Games}
An $n$-player {\em extensive-form game} consists of the following.
\begin{enumerate}
	\item a tree of histories $H$, rooted at $\Root$. The set of leaves, or {\em terminal histories}, is denoted $Z$. The edges of $H$ are labeled with {\em actions}, and for a node $h \in H \setminus Z$, the set of actions at $h$ is denoted $A_h$;
	\item a partition $H \setminus Z = H_0 \sqcup H_1 \sqcup \dots \sqcup H_n$ of the histories, where $H_i$ is the set of nodes at which player $i$ acts;
	\item for each player $i \in [n] := \{ 1, \dots, n\}$, a partition $\mc I_i$ of $H_i$ into {\em information sets}, or {\em infosets}. Nodes in the same information set must have the same set of action labels: for an information set $I \in \mc I_i$, the shared action set is denoted $A_I$; 
	\item for each node $h \in H_0$, a fixed distribution $p(\cdot|h)$ over the actions at $h$; and
	\item for each player $i$, a {\em utility function} $u_i : Z \to \R$. 
\end{enumerate}
We will demand that players have {\em perfect recall}, in other words, that they do not forget information. Formally, call $\sigma_i(h)$  the {\em sequence} of information sets reached by player $i$ and actions played at those infosets, on the path from the root to node $h$, not including (if any) the infoset at $h$ itself. We use $\Sigma_i$ to denote the set of all sequences of player $i$. Then we will insist that all nodes in the same infoset $I \in \mc I_i$ have the same sequence for player $i$, and we will write $\sigma_i(I)$ to denote that shared sequence. In perfect-recall games, the last infoset-action pair uniquely identifies a sequence; therefore, we will write $Ia$ to mean the sequence ending with the infoset $I$ and action $a$.

The game tree induces a natural partial ordering over infosets, sequences, and histories. We will use  $\preceq$ to denote this ordering. For example, $I \preceq z$ means $z$ is a descendant of some $h \in I$, and $Ia' \preceq I'$.

A {\em pure strategy} $x_i \in X_i$ assigns an action $a \in A_I$ to each infoset $I \in \mc I_i$. A {\em pure strategy profile} (or simply {\em pure profile}) $x = (x_1, \dots, x_n)$ is a tuple of pure strategies, one per player.  $-i$ denotes the set of all players except $i$---for example, $x_{-i} = (x_1, \dots, x_{i-1}, x_{i+1}, \dots, x_n)$. Notationally, we will write $x_i(a|I) = 1$ if strategy $x_i$ plays action $a$ at infoset $I$ (and $0$ otherwise). Analogously, we will write $x_i(t|s) = 1$ if player $i$ plays all the actions on the path from $s$ to $t$ (both $s$ and $t$ could be nodes, infosets, or sequences), and $x_i(s) = 1$ if $x_i(s|\Root) = 1$. Note that $x_i(Ia)$ and $x_i(a|I)$ are different: the former is the indicator that {\em sequence} $Ia$ is reached by player $i$, whereas the latter is the indicator of whether action $a$ is played {\em locally} at infoset $I$. We will also write $x(z) := \prod_{i \in [n]} x_i(z)$ and $x_{-i}(z) := \prod_{j \ne i} x_j(z)$. 

A {\em mixed strategy} of player $i$ is a distribution $\pi_i \in \Delta(X_i)$.\footnote{$\Delta(S)$ is the set of probability distributions on $S$.} We say $\pi_i$ is {\em behavioral} if the actions at every infoset of player $i$ are mutually independent. 

A {\em correlated strategy profile} $\pi \in \Delta(X_1 \times \dots \times X_n)$ is a distribution over pure strategy profiles. Any correlated strategy profile induces a distribution over the terminal nodes of the game. We will call this distribution the {\em outcome distribution} induced by $\pi$, and we use $z \sim \pi$ (or $z \sim x$ if $\pi = x$ happens to be a pure profile) to denote a sample from it. 

Given any pure strategy profile $x$, the (expected) utility $u_i(x)$ of player $i$ is 
$$u_i(x) := \E_{z \sim x} u_i(z) = \sum_{z \in Z} u_i(z) p(z) x(z)$$
where $p(z)$ is the probability that chance plays all actions on the  $\Root \to z$ path.
 It will also be useful to define the {\em counterfactual utility}. Intuitively, $u_i(x; I)$ is the conditional utility that player $i$ achieves at infoset $I$, multiplied by the probability that {\em other players} reach infoset $I$. Given a player $i$ and infoset $I \in \mc I_i$, the counterfactual utility from $I$ is defined by 
 $$u_i(x ; I) := \sum_{z \succ I} u_i(z) x_i(z|I) x_{-i}(z).$$
To avoid issues of bit complexity, we assume that all numbers (utilities, nature reach probabilities, correlated strategy profiles, {\em etc.}) are expressed as rational numbers with $\poly(|H|)$-bit numerators and denominators.
\subsection{Extensive-Form and Behavioral Correlated Equilibria}

To define the notions of equilibrium relevant to this paper, we must first introduce the framework of $\Phi$-regret~\cite{Greenwald03:General}. For each player $i$, let $\Phi^*_i$ be the set of functions $\phi : X_i \to X_i$. A function $\phi \in \Phi_i^*$ should be interpreted as a {\em deviation} by player $i$: if player $i$ should play $x$ under $\pi$, it instead will play $\phi(x)$. 
\begin{definition}\label{def:regret}
	Let $\pi$ be a correlated profile. The {\em regret} of player $i$ against $\phi : X_i \to X_i$ is the amount by which $i$ would increase its expected utility by applying deviation $\phi$:
	\begin{align}
	R_i(\pi, \phi) := \E_{x \sim \pi} \qty[u_i(\phi(x_i), x_{-i}) - u_i(x_i, x_{-i})].
	\end{align}
	For each player $i$, let $\Phi_i \subseteq \Phi_i^*$ be a set of deviations. Let $\Phi = (\Phi_1, \dots, \Phi_n)$. We say that $\pi$ is an {\em $(\eps, \Phi)$--equilibrium} if no deviation in $\Phi$ is more than $\eps$-profitable, that is, if $R_i(\pi, \phi) \le \eps$ for every player $i$ and deviation $\phi \in \Phi_i$. 
\end{definition} 
Larger sets $\Phi_i$ create tighter notions of equilibrium. For example, if each $\Phi_i $ is the set of constant transformations,  $\Phi_i = \{ \phi : x \mapsto x^* \mid x^* \in X_i \}$,\footnote{$\phi : x \mapsto x^*$ is the function that maps every input to $x^*$.} then a $\Phi$-equilibrium is a {\em normal-form coarse correlated equilibrium}~\cite{Moulin78:Strategically}; if $\Phi_i = \Phi_i^*$ for every $i$, then a $\Phi$-equilibrium is a {\em normal-form correlated equilibrium}~\cite{Aumann74:Subjectivity}. The notions of interest to us in this paper will lie between these two extremes. 

We may also enforce the above condition {\em from any infoset}, leading to a notion of {\em counterfactual $\Phi$-regret}.
\begin{definition}[\citealp{Morrill20:Hindsight}]\label{def:counterfactual}
	The {\em counterfactual regret} of player $i$ against deviation $\phi : X_i \to X_i$ at infoset $I$ is the amount by which player $i$ would increase its counterfactual utility from $I$ by applying $\phi$:
	\begin{align}
		R_i(\pi, \phi; I) := \E_{x \sim \pi} \qty [ u_i(\phi(x_i), x_{-i}; I) - u_i(x_i, x_{-i}; I)] 
	\end{align}
	A {\em counterfactual $(\eps, \Phi)$-equilibrium} is a correlated profile $\pi$ such that no deviation in $\Phi$ has more than $\eps$ counterfactual regret from any infosets, that is, if $R_i(\pi, \phi; I) \le \eps$ for every $i \in [n]$, $\phi \in \Phi_i$, and $I \in \mc I_i$.
\end{definition}

We now define two relevant sets of deviations, one of which uses the usual (non-counterfactual) notion of regret, and the other of which uses the counterfactual regret.

\begin{definition}[\citealp{Stengel08:Extensive,Morrill21:Efficient}]
	A {\em causal deviation} is a deviation that can be executed by a player who, upon reaching any infoset $I$, observes the recommendation $x_i(
\cdot|I)$ before choosing its action, {\em unless it has disobeyed a past recommendation}. More formally, a causal deviation is a function $\phi \in \Phi^*_i$ such that $\phi(x_i)(\cdot|I)$ depends only on $I$, and the values $x_i(Ja)$ for $J \preceq I$. An $\eps$-{\em extensive-form correlated equilibrium} (EFCE) is an $(\eps, \Phi)$-equilibrium where $\Phi$ is the set of causal deviations.
\end{definition}

The extensive-form correlated equilibrium is a well-understood notion. It is known that the complexity of computing one EFCE exactly is polynomial \cite{Jiang11:Polynomial}, and more recently, regret minimization algorithms have emerged that are guaranteed to converge to EFCE~\cite{Celli20:No}. 

\begin{definition}[\citealp{Morrill21:Efficient}]
	A {\em behavioral deviation}\footnote{
 One should not confuse behavioral {\em deviations} from behavioral {\em strategies}---the two terms only share a name.} is a deviation that can be executed by a player who, upon reaching any of its infosets $I$, observes the recommendation $x_i(\cdot|I)$ before choosing its action. More formally, a behavioral deviation is a function $\phi \in \Phi^*_i$ such that $\phi(x_i)(\cdot|I)$ depends only on $I$, and the values $x_i(\cdot|J)$ for $J \preceq I$. An $\eps$-{\em behavioral correlated equilibrium} (BCE) is a counterfactual $(\eps, \Phi)$-equilibrium where $\Phi$ is the set of behavioral deviations.
\end{definition}

It is clear from the definitions that every BCE is an EFCE. BCE appears at first to be a significant refinement of EFCE. Indeed, the definition refines EFCE in two ways. First, BCE uses a larger family of deviations (every causal deviation is behavioral, but not the other way); second, BCE uses {\em counterfactual} regret whereas EFCE uses only the typical $\Phi$-regret. Indeed, three disjoint sets of authors \cite{Morrill21:Efficient,Song22:Sample,Zhang23:Simple} have developed no-regret learning algorithms converging to behavioral correlated equilibrium. However, unlike the aforementioned EFCE algorithms, these algorithms have worst-case runtime exponential in the size of $\Gamma$.

\section{Main Result Statement and Examples}\label{se:examples}

We start by defining our notion of outcome equivalence:
\begin{definition}
    Two correlated strategy profiles $\pi$ and $\pi'$ are {\em outcome-equivalent} if they induce the same outcome distribution.
\end{definition}
Our main result, then, simply states:
\begin{theorem}[Main result]\label{th:main}
	Every $\eps$-EFCE is outcome-equivalent to an $\eps$-BCE.
\end{theorem}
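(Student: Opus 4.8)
The plan is to take the given $\eps$-EFCE $\pi$ and rewrite only its \emph{off-path} recommendations, leaving every on-path recommendation untouched, so that outcome-equivalence is automatic while the extra BCE constraints are forced to hold. First I would fix the relevant notion: for a pure profile $x$ and player $i$, call an infoset $I \in \mc I_i$ \emph{off-path} if $x_i(\sigma_i(I)) = 0$, i.e.\ $i$'s recommended actions do not lead to $I$. The starting observation is that off-path actions are invisible to the outcome: every reach indicator $x_j(z)$ is killed by the non-recommended action at the branch where the $\Root \to z$ path leaves $j$'s recommended path, so an off-path action of $j$ only ever multiplies into a reach product that is already zero. Consequently, if I resample all off-path actions while coupling the on-path actions to the original draw $x \sim \pi$, the resulting profile $\pi'$ satisfies $x'_i(z) = x_i(z)$ for every player and every $z$; hence $x'_{-i}(z) = x_{-i}(z)$ and $\prod_i x'_i(z) = \prod_i x_i(z)$, and $\pi'$ is outcome-equivalent to $\pi$ no matter how the off-path actions are chosen.

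The key construction is \emph{how} to choose them. Within each off-path subtree, a maximal subtree of $i$'s infosets hanging off the recommended path at a deviation point $D$, I would set $x'_i$ to be a deterministic counterfactual best response, i.e.\ the strategy maximizing counterfactual utility throughout the subtree, against the conditional law of $x_{-i}$ given $i$'s recommendations along the recommended path up to $D$. Because $x'_{-i}(z) = x_{-i}(z)$, this best response depends only on the original $\pi$, so it is well-defined and, since $\pi$ has polynomial bit-complexity and counterfactual best responses are computable by a bottom-up tree traversal, computable in polynomial time. Determinism does double duty. On the one hand it makes each off-path recommendation a deterministic function of the pre-deviation (on-path) recommendations, hence conditionally independent of $x_{-i}$; so for any behavioral deviation $\phi$, the recommendations $\phi$ reads \emph{after} its first deviation add no information about $x_{-i}$, and $\phi$'s continuation below any infoset is realizable as a \emph{causal} continuation. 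On the other hand, best-response-ness will make any deviation inside an off-path subtree unprofitable.

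The crux is the verification that $\pi'$ is an $\eps$-BCE, which I would carry out infoset by infoset. Fix $i$, an infoset $I$, and a behavioral deviation $\phi$ with continuation $c$ from $I$; expand $R_i(\pi',\phi;I)$ and split each sample according to whether $x_i(\sigma_i(I)) = 1$ or $0$. On samples with $x_i(\sigma_i(I)) = 1$, where $I$ is reached on-path, the recommended continuation from $I$ coincides with the original $x_i$, and the contribution should equal \emph{exactly} the ordinary regret of the causal deviation ``follow recommendations to $I$, then play $c$''; since $c$ is realizable causally (by the previous paragraph) and $\pi$ is an $\eps$-EFCE, this is at most $\eps$. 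On samples with $x_i(\sigma_i(I)) = 0$, where $I$ is reachable only by deviating, the recommended continuation is our best response, against which $c$ cannot gain in conditional expectation, so the contribution is at most $0$. Summing yields $R_i(\pi',\phi;I) \le \eps$ for all $i$, $I$, and behavioral $\phi$, which is precisely the BCE condition.

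I expect this last decomposition to be the main obstacle, for two intertwined reasons. First, I must rigorously reduce the arbitrary behavioral continuation $c$, which may re-read recommendations after deviating and which enters the \emph{counterfactual} objective of BCE, to a genuine causal deviation whose \emph{ordinary} regret an EFCE controls; this reduction rests entirely on the determinism of the off-path recommendations making post-deviation observations uninformative. Second, I must check that the weightings align: the BCE counterfactual comparison at $I$ carries only the opponents' reach $x_{-i}(I)$, whereas the EFCE causal-deviation regret carries the additional factor $x_i(\sigma_i(I))$, and the argument closes only because that missing factor is exactly what partitions the samples into the on-path set (where EFCE applies) and the off-path set (where best-response nonpositivity applies). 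The remaining bookkeeping, that off-path edits leave all reach indicators and hence outcomes unchanged, is routine once the annihilation-at-the-branch observation is established.
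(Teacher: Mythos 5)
Your proposal is correct and follows essentially the same route as the paper's proof: modify only off-path recommendations to deterministic counterfactual best responses conditioned on the deviation point (the paper's $x_i^{Ja}$), use determinism to reduce behavioral deviations to causal ones, and split the counterfactual regret at each infoset into the on-path event (bounded by $\eps$ via the EFCE guarantee applied to the causal deviation $\phi^{\succeq I}$) and the off-path event (bounded by $0$ via best-response-ness). The subtlety you flag about the reach-weighting factor $x_i(\sigma_i(I))$ partitioning the samples is exactly how the paper's displayed decomposition closes, so no gap remains.
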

Before proving the main result, we give two examples showing why such a result may be believable and illustrating some of the ideas used in the proof. The first, due to \citet{Morrill20:Hindsight} gives an example of a BCE that is not an EFCE. 
\begin{figure*}[t]
\newcommand{\util}[2]{\textbf{\textsf{ {\color{p1color} #1}, {\color{p2color} #2}}}} 
    \centering
    \tikzset{
        every path/.style={-},
        every node/.style={draw},
        infoset1/.style={-, dotted, ultra thick, color=p1color},
        infoset2/.style={infoset1, color=p2color},
        terminal/.style={},
    }
    \forestset{
            default preamble={for tree={
            parent anchor=south, child anchor=north, l=48pt, s sep=24pt
    }},
      p1/.style={
          regular polygon,
          regular polygon sides=3,
          inner sep=2pt, fill=p1color, draw=none},
      p2/.style={p1, shape border rotate=180, fill=p2color},
      parent/.style={no edge,tikz={\draw (#1.parent anchor) to (!.child anchor);}},
      parentd/.style n args={2}{no edge,tikz={\draw[ultra thick, p#2color] (#1.parent anchor) to (!.child anchor); }},
      nat/.style={},
      terminal/.style={draw=none, font=\sf, inner sep=2pt},
      el/.style={edge label={node[midway, fill=white, inner sep=1pt, draw=none] {#1}}},
      d/.style={edge={ultra thick, draw={p#1color}}},
      comment/.style={no edge, draw=none, align=center, font=\tiny\sf},
  }
  \begin{forest}
  [,p1
    [,el=$\neg U$,p1
      [,el=$X_1|\neg U$,p2,name=a
        [\util{2}{1},el=$X_2$,terminal]
        [\util{0}{0},el=$Y_2$,terminal]
      ]
      [,el=$Y_1|\neg U$,p2
        [\util{0}{0},el=$X_2$,terminal]
        [\util{1}{2},el=$Y_2$,terminal]
      ]
    ]
    [,el=$U$,p1
      [,el=$X_1|U$,p2
        [\util{3}{2},el=$X_2$,terminal]
        [\util{0}{0},el=$Y_2$,terminal]
      ]
      [,el=$Y_1|U$,p2,name=b
        [\util{0}{0},el=$X_2$,terminal]
        [\util{2}{3},el=$Y_2$,terminal]
      ]
    ]
  ]
  \draw[infoset2] (a) -- (b);
  \end{forest}
 \renewcommand{\util}[1]{\textbf{\textsf{ {\color{p1color} #1}}}}
  \qq{}\vrule\qq{}
  \begin{forest}
    [,p1
      [\util{2},el=$L$,terminal]
      [,p1,el=$R$
        [\util{1},el=$L'$,terminal]
        [\util{0},el=$R'$,terminal]
      ]
    ]
  \end{forest}
  \caption{{\em Left:} The {\em extended battle of the sexes} game in \Cref{ex:morrill}. The players are \pone (P1) and \ptwo (P2). Infosets are connected by dotted lines. Player 1 first chooses whether or not to {\em upgrade} ($U$). Then, both players simultaneously choose an event ($X$ or $Y$) to attend. Player 1 prefers $X$, while player 2 prefers $Y$. If the players attend different events, they are unhappy and get utility $0$. If the players attend the same event, the player attending its preferred event gets $2$, and the other player gets $1$. Upgrading gives an extra point of utility if the players attend the same event. {\em Right:} The game used in \Cref{ex:counterfactual} illustrating that use counterfactual regret is also significant.}\label{fig:bots}
\end{figure*}

\begin{example}[\citealp{Morrill20:Hindsight}]\label{ex:morrill}
Consider the extensive-form game depicted in \Cref{fig:bots} (left). Consider the correlated profile $\pi$ that uniformly mixes between the profiles $(\neg U, X_1|U, X_1|\neg U, X_2)$ and $(\neg U, Y_1|U, Y_1|\neg U, Y_2)$. Both players achieve expected utility $1.5$. This profile is {\em not} a BCE: player $1$ can deviate profitably by playing $U$ (contrary to the recommendation), and then following the recommendation to play either $X_1$ or $Y_1$. However, this deviation does not work for EFCE, because a player who deviates by playing $U$ will not receive the subsequent recommendation. Indeed, $\pi$ {\em is} an EFCE. This shows that behavioral deviations can be more powerful than causal deviations.
\end{example}

Although $\pi$ is not a BCE, there is a BCE $\pi'$ that is {\em outcome-equivalent} to $\pi$. Indeed, consider the correlated profile that evenly mixes between $(\neg U, X_1|U, X_1|\neg U, X_2)$ and $(\neg U, X_1|U, Y_1|\neg U, Y_2)$ (where the only difference is that, in the second pure profile, $Y_1|U$ has been replaced by $X_1|U$). This change preserves the outcome distribution, because the recommendation $Y_1|U$ is never actually given to player 1 in equilibrium, as player 1 plays $\neg U$ in equilibrium. This profile $\pi'$ {\em is} a BCE: the previous deviation no longer works, because, after playing $U$, player $1$ is always given the recommendation $X_1$---its counterfactual best response---instead of any useful recommendation.

The second example shows that the use of {\em counterfactual} regret in the BCE definition is also significant. 

\begin{example}\label{ex:counterfactual}
	Consider the (one-player) extensive-form game depicted in \Cref{fig:bots} (right). Then the profile $\pi = (0.9L + 0.1R, R')$ is a $0.2$-EFCE, but it is not an $0.2$-BCE, because the player can deviate to $L'$ at $B$ to counterfactually improve its utility at that infoset by $1$. However, once again, there is a $0.2$-BCE that is outcome-equivalent to $\pi$: namely, $\pi' = 0.9(L, L') + 0.1 (R, R')$.
\end{example}
Interestingly, in this example, the profile $\pi'$ is not a behavior strategy, and indeed there is no $0.2$-BCE that is a behavior strategy and outcome-equivalent to $\pi$. This illustrates that converting from EFCE to BCE in general will sometimes require turning behavior strategies into non-behavior strategies. 
\section{Proof of Main Result}
In this section, we prove the main result, \Cref{th:main}. 

	Let $\pi$ be an $\eps$-EFCE. For each player $i$, infoset $I \in \mc I_i$ and action $a \in A_I$, define the {\em counterfactual best response strategy} $x^{Ia}_i$ as the strategy that maximizes the countefactual utility at $I$ against $\pi_{-i}$, conditioned on $x_i$ playing to $Ia$. Formally,
	\begin{align}
	x^{Ia}_i \in \argmax_{x_i' \in X_i} \E_{x \sim \pi} \qty[ u_i(x_i', x_{-i}; I) \mid  x_i(Ia) = 1].
	\end{align}
	for every infoset $I$.	Assume ties are broken consistently---for example, in favor of the lexicographically first action. Of course, it is only interesting to investigate $x^{Ia}_i$ at infosets $J \succeq I$. Given the conditional opponent reach probabilities $$\E_{x \sim \pi}\qty[ x_{-i}(z) \middle| x_i(Ia) = 1]$$ for every $z \in Z$, the strategy $x_i^{Ia}$ can be computed by a simple backwards tree traversal. 
	
	Now, consider the distribution $\pi'$ generated by the following procedure. Sample $x \sim \pi$, and then for every infoset $I \in \mc I_i$ not reached, replace the recommendation at $I$ with the recommendation at $I$ in $x^{Ja}_i$ where player $i$ deviated before $I$. Formally, for every player $i$ and every infoset $I \in \mc I_i$ with $x_i(I) = 0$, let $Ja$ be the sequence that $i$ deviates before reaching $I$, that is, let $Ja$ be such that $x_i(Ja) = 1$, $J \preceq I$, but $Ja \not\preceq I$. Then replace $x_i(\cdot|I)$ with $x^{Ja}_i(\cdot|I)$.
	
	We claim that $\pi'$ is an $\eps$-BCE. Let $\phi$ be any behavioral deviation of player $i$ and let $I$ be any infoset of player $i$. Let $x \sim \pi'$. First, note that, by construction of $\pi'$, a deviating player in $\pi'$ cannot profit from behavioral deviations compared to causal deviations. This is because, for any sequence $Ia$, the values $x_i(Ja')$ for $J \preceq I$ completely determine $x_i(Ia)$: if $x_i(I) = 1$ then $x_i(Ia) = x_i(a|I)$, and if $x_i(I) = 0$ then $x_i(Ia) = x_i^{Ja'}(I)$ where $Ja$ is the deviation point of $x_i$ before $I$. Thus, we may assume that $\phi$ is causal. Further, $\phi$ cannot profit on $x_i \sim \pi'$ such that $x_i(I) = 0$: by definition, $x_i$ will be playing a counterfactual best response at every such infoset $I$ conditioned on all information that the deviating player knows at that point. In symbols,
	\begin{align}
	&R_i(\pi', \phi; I) \\& = \underbrace{\E_{x \sim \pi'} \qty[u_i(\phi(x_i), x_{-i};I) - u_i(x_i, x_{-i} ; I) \mid x(I) = 0]}_{\le 0} \\&\qq{}\cdot \E_{x \sim \pi'}[1-x(I)] \\& \phantom{=} + \E_{x \sim \pi'} \qty[u_i(\phi(x_i), x_{-i};I) - u_i(x_i, x_{-i} ; I) \mid x(I) = 1] \\&\qq{}\cdot \E_{x \sim \pi'}[ x(I)]
	\\&\le \E_{x \sim \pi'} \qty[u_i(\phi(x_i), x_{-i};I) - u_i(x_i, x_{-i} ; I) \mid x(I) = 1] \\&\qq{}\cdot \E_{x \sim \pi'}[ x(I)]
	\\&= R_i(\pi', \phi^{\succeq I}) \le \eps
	\end{align}
	where $\phi^{\succeq I}$ is the deviation that applies $\phi$ only at infosets $J \succeq I$, that is, 
	\begin{align}
	\phi^{\succeq I}(x_i)(a|J) = \begin{cases}
	\phi(x_i)(a|J) &\qif J \succeq I \\
	x_i(a|J) &\qq{otherwise}
	\end{cases} \tag*{\qed}
	\end{align}
\section{Algorithm for Converting from EFCE to BCE}\label{se:alg}
The proof of \Cref{th:main} also implies a {\em polynomial-time algorithm} for computing  a BCE from an EFCE. That is, so long as the EFCE $\pi$ is expressed in a form allowing for the computation of the counterfactual best responses $x_i^{Ia}$, the proof gives a polynomial-time algorithm for computing a BCE. In this section, we give a possible formulation of this polynomial-time algorithm. First, we must define the format that we will use to represent correlated profiles.
\begin{definition}
	A correlated profile $\pi$ is a {\em mixture of small-support products} if 
	\begin{align}
	\pi = \sum_{t = 1}^T \alpha^{(t)} \bigotimes_{i=1}^n \pi^{(t)}_i \qq{where} \pi^{(t)}_i = \sum_{k=1}^{K} \beta^{(t, k)}_i x_i^{(t, k)}.
	\end{align}
	where $T$ and $K$ are positive integers, $\sum_{t=1}^T \alpha^{(t)} = 1$, $\sum_{k=1}^K \beta_i^{(t, k)} = 1$ for every $i$ and $t$, and the notation $\bigotimes_{i=1}^n \pi_i$ means the product distribution $\pi \in \Delta(X_1) \times \dots \times \Delta(X_n)$ whose marginal on $X_i$ is $\pi_i$. 
\end{definition}
Such a $\pi$ can be expressed using $O(T\cdot K \cdot |H|)$ numbers, namely, for each $t \le T, k \le K$, and sequence $Ia\in \Sigma_i$ we need to represent $x_i^{(t, k)}(a|I)$, $\alpha^{(t)}$, and $\beta^{(t, k)}_i$. We will assume in the rest of this section that correlated profiles are represented as a mixture of products.

One may wonder at this point about the case where the $\pi^{(t)}_i$ are behavior strategies. In this case, it is possible for $K$ to be exponentially large: for example, if $\pi^{(t)}_i$ is fully mixed then $K = \prod_{I \in \mc I_i} |A_I|$. However, there is a remedy for this:
\begin{lemma}\label{lem:behavioral}
	Let $\pi$ be an $\eps$-EFCE expressed as a mixture of $T$ products, where each $\pi^{(t)}_i$ is a behavior strategy. Then, there is a $\poly(|H|, T)$-time algorithm that returns an $\eps$-EFCE $\pi'$ that (1) is outcome-equivalent to $\pi$, and (2) is a mixture of small-support products with $K \le |H|$ and the same $T$.
\end{lemma}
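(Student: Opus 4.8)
The plan is to leave the outer mixture untouched---keeping the number of components $T$ and the weights $\alpha^{(t)}$ exactly as given---and to replace each behavior strategy $\pi^{(t)}_i$ by a \emph{realization-equivalent} convex combination of few pure strategies. The guiding observation is that everything we must control, namely both the outcome distribution and whether $\pi$ is an $\eps$-EFCE, is a function only of the invariant consisting of the $\alpha^{(t)}$ together with, for every component $t$ and player $i$, the realization plan of $\pi^{(t)}_i$, i.e.\ the vector $\big(\E_{x_i\sim\pi^{(t)}_i}[x_i(Ia)]\big)_{Ia\in\Sigma_i}$. So as long as the sparsification preserves each per-component realization plan \emph{exactly}, the resulting $\pi'$ will automatically be outcome-equivalent to $\pi$ and still an $\eps$-EFCE; the whole content of the lemma is then the sparsification itself.

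For the sparsification, fix $t$ and $i$ and let $y\in\R^{\Sigma_i}$ be the realization plan of $\pi^{(t)}_i$, which is computable in $O(|H|)$ time from the behavioral action probabilities by a single root-to-leaf pass. The vector $y$ lies in the sequence-form polytope $\conv\{x_i : x_i\in X_i\}$, whose vertices are exactly the pure strategies and whose affine dimension is at most $|\Sigma_i|-1\le |H|-1$ (distinct infoset--action pairs inject into tree edges under perfect recall, so $|\Sigma_i|\le|H|$). By Carath\'eodory's theorem $y$ is a convex combination of at most $|\Sigma_i|\le|H|$ vertices, which is the desired bound on $K$; it remains to produce such a decomposition constructively in polynomial time. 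I would do this by greedy peeling: maintain a nonnegative flow $y$ (initially the realization plan, with remaining mass $y(\Root)$); repeatedly build a pure strategy $x_i$ by descending from the root and choosing at each reached infoset $I$ some action $a$ with $y(Ia)>0$ (one exists since $\sum_a y(Ia)=y(\sigma_i(I))>0$ on the reached part), then set $\beta=\min\{y(Ia):x_i(Ia)=1\}>0$, record the term $\beta x_i$, and update $y\leftarrow y-\beta x_i$. A short check shows $y$ stays a nonnegative flow, and each iteration zeroes at least one previously positive coordinate while creating no new positive ones, so the loop halts after at most $|\Sigma_i|\le|H|$ steps. Padding all the decompositions with zero-weight pure strategies to a common length gives $K\le|H|$, and the total work over the $Tn$ behavior strategies is $\poly(|H|,T)$.

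It remains to justify that the invariant suffices. Outcome equivalence is immediate: within a product component, player $i$ reaches terminal $z$ with probability $\E_{\pi^{(t)}_i}[x_i(z)]$, a coordinate of the realization plan, so $\Pr_\pi[z]=\sum_t\alpha^{(t)}\,p(z)\prod_i \E_{\pi^{(t)}_i}[x_i(z)]$ is unchanged. For the EFCE constraints I would use the standard reduction of causal-deviation regret to \emph{trigger} deviations: a causal deviation follows recommendations and, at its unique first point of disobedience at a sequence $Ia$, switches to an autonomous pure continuation $\hat x_i$ (receiving no further recommendations). The regret of such a trigger is a sum over $z\succ I$ of terms proportional to $\E_\pi[x_i(Ia)\,\hat x_i(z|I)\,x_{-i}(z)]$ and $\E_\pi[x_i(z)\,x_{-i}(z)]$; using independence within each component, each term equals $\sum_t\alpha^{(t)}$ times a product of a player-$i$ realization value and an opponents' realization value, i.e.\ a bilinear function of the invariant. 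Hence every trigger regret is identical for $\pi$ and $\pi'$, and since the worst-case causal regret is a fixed function of the trigger regrets, $\pi'$ is an $\eps$-EFCE.

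The main obstacle is the sparsification step: obtaining the support bound $K\le|H|$ \emph{together with} a genuinely polynomial-time construction, for which the greedy-peeling argument (equivalently, a constructive Carath\'eodory routine on the sequence-form polytope) is the crux, along with the counting bound $|\Sigma_i|\le|H|$. The only other point needing care is the regret invariant---verifying that causal regret reduces to trigger regrets and that each trigger regret is bilinear in the per-component realization plans---so that replacing behavior strategies by realization-equivalent sparse mixtures leaves every player's incentive to deviate, and hence the $\eps$-EFCE property, exactly intact.
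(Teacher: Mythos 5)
Your proposal is correct, and its skeleton coincides with the paper's: reduce everything to the per-component sequence-form (realization-plan) vectors, note that both the outcome distribution and the EFCE gap are determined by these vectors together with the weights $\alpha^{(t)}$, and then sparsify each $\vec\pi_i^{(t)}$ into a convex combination of at most $|\Sigma_i| \le |H|$ pure-strategy sequence-form vectors via Carath\'eodory on the sequence-form polytope of \citet{Koller94:Fast}. Where you genuinely differ is in the constructive step: the paper invokes the general algorithmic Carath\'eodory result of \citet[Theorem 3.9]{Groetschel81:ellipsoid} as a black box, whereas you give an explicit combinatorial flow-peeling routine---repeatedly extract a pure strategy along positive-mass sequences, subtract the bottleneck weight, and observe that each iteration permanently zeroes a coordinate. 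Your routine is self-contained, strongly polynomial, trivially implementable, and yields the support bound $K \le |\Sigma_i| \le |H|$ directly from the iteration count rather than from a dimension argument; the paper's citation is shorter but rests on LP/ellipsoid machinery. You also supply a justification (via the reduction of causal deviations to trigger deviations, each of whose regrets is bilinear in the per-component realization plans) for the invariance claim that the paper dispatches with the single phrase ``by definition, the EFCE gap and the outcome distribution both only depend on the sequence-form reach probabilities''---a claim that does deserve the verification you sketch, since it requires that a causal deviator's information at any infoset (the point of departure from the recommendation chain and the action taken there) has a distribution that is linear in the realization plan. The only nitpick is cosmetic: the pure strategies your peeling produces are specified only on reached infosets, so you should note that actions at unreached infosets are assigned arbitrarily, which does not affect the sequence-form vector.
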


\begin{proof}
	By definition, the EFCE gap and the outcome distribution both  only depend on the {\em sequence-form reach probabilities} $\pi_i^{(t)}(Ia)$ for each $Ia \in \Sigma_i$. These form a vector $\vec\pi_i^{(t)} \in [0, 1]^{\Sigma_i}$, called the {\em sequence-form vector}. Intuitively, the sequence-form vector $\vec \pi_i^{(t)}$ is a complete description of a strategy up to outcome equivalence, since the probability of any given terminal node being reached under profile $\pi$ is just $p(z) \cdot \prod_i \vec \pi_i^{(t)}(z)$. Therefore, it suffices to show that the sequence-form vector $\vec \pi_i^{(t)}$ is a convex combination of a small number of sequence-form vectors of pure strategies. The fact that the set of sequence-form vectors is a convex polytope in extensive-form games was shown by \citet{Koller94:Fast}. Indeed, one can directly describe the set using the following linear constraint system:
\begin{align}
    \forall\sigma{:}~\pi_i(\sigma) \ge 0;\quad \pi_i(\Root) = 1; \quad \forall I{:}~\pi_i(I) = \sum_{a \in A_I} \pi_i(Ia).
\end{align}
Therefore, by Carath\'eodory's theorem\footnote{Carath\'eodory's theorem states that every point in a convex compact set $X$ of dimension $d$ is a convex combination of at most $d+1$ extreme points of $X$.} on convex hulls, there exists a decomposition $\vec\pi_i^{(t)} = \sum_{k=1}^{|\Sigma_i|} \beta_i^{(t, k)} \vec x_i^{(t, k)}$, where the $\vec x_i$s are sequence forms of pure strategies. An explicit algorithm for computing such a decomposition is described by \citet[Theorem 3.9]{Groetschel81:ellipsoid}. This completes the proof.
\end{proof}
All algorithms that we are aware of that compute an exact or approximate EFCE return their correlated profiles as mixtures of behavioral profiles (or as mixtures of pure profiles, which are just the special case $K = 1$). \Cref{lem:behavioral} is therefore important in that it allows us to convert from behavior strategies to mixtures of small-support products, and therefore allows the main result of this section to also deal with behavior strategy profiles. We are now ready to state the main result of this section.
\begin{theorem}\label{th:alg}
	There exists a $\poly(|H|, T, K)$-time algorithm that takes as input an $\eps$-EFCE $\pi$ as a mixture of small-support products, and returns $\eps$-BCE in the same format.
\end{theorem}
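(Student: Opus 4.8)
The plan is to turn the existence argument in the proof of \Cref{th:main} into an explicit construction, by observing that the map carrying $\pi$ to its witnessing $\eps$-BCE $\pi'$ acts \emph{coordinate-wise and deterministically} on pure strategies. Concretely, for each player $i$ define the map $f_i : X_i \to X_i$ that, given a pure strategy $x_i$, leaves $x_i(\cdot|I)$ unchanged at every reached infoset ($x_i(I)=1$) and overwrites $x_i(\cdot|I)$ with $x_i^{Ja}(\cdot|I)$ at every unreached infoset ($x_i(I)=0$), where $Ja$ is the (unique, by perfect recall) deviation point of $x_i$ before $I$. The value $f_i(x_i)$ is a well-defined pure strategy that depends only on $x_i$ and on the fixed counterfactual best responses $x_i^{Ja}$; it does not depend on the opponents' strategies nor on which mixture component is drawn. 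Since the construction of $\pi'$ in \Cref{th:main} is exactly ``sample $x\sim\pi$ and output $(f_1(x_1),\dots,f_n(x_n))$,'' and $\pi$ is a mixture of products, the pushforward preserves the format:
$$\pi' = \sum_{t=1}^T \alpha^{(t)} \bigotimes_{i=1}^n \sum_{k=1}^K \beta_i^{(t,k)}\, f_i\!\qty(x_i^{(t,k)}),$$
with the same $T$, the same (or smaller) $K$, and the same mixture weights; only the pure strategies are replaced. That $\pi'$ is an $\eps$-BCE then follows immediately from \Cref{th:main}.

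So the whole algorithm reduces to (i) computing the counterfactual best responses $x_i^{Ia}$ for every $i$ and $Ia\in\Sigma_i$, and (ii) evaluating each $f_i$ on the $O(nTK)$ pure strategies in the support. For step (i), the only quantities needed are the conditional opponent reach probabilities $\E_{x\sim\pi}\qty[x_{-i}(z)\mid x_i(Ia)=1]$, after which $x_i^{Ia}$ follows from a standard backward-induction maximization of $\sum_{z\succ I} u_i(z)\,x_i'(z|I)\,\E[x_{-i}(z)\mid x_i(Ia)=1]$ over the subtree below $I$ (breaking ties lexicographically). The key point is that although $\pi$ is correlated, conditioning is tractable because it is a \emph{mixture of products}: within each component the players are independent, so
$$\E_{x\sim\pi}\qty[x_{-i}(z)\mid x_i(Ia)=1] = \frac{\sum_{t}\alpha^{(t)}\,\pi_i^{(t)}(Ia)\,\pi_{-i}^{(t)}(z)}{\sum_{t}\alpha^{(t)}\,\pi_i^{(t)}(Ia)},$$
where $\pi_i^{(t)}(Ia)=\sum_k\beta_i^{(t,k)}x_i^{(t,k)}(Ia)$ and $\pi_{-i}^{(t)}(z)=\prod_{j\ne i}\sum_k\beta_j^{(t,k)}x_j^{(t,k)}(z)$ are sequence-form reach probabilities read directly off the representation. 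These can all be precomputed in $\poly(|H|,T,K)$ time, and there are only $O(|H|)$ triples $(i,Ia)$, so step (i) is polynomial. For step (ii), evaluating $f_i(x_i^{(t,k)})$ requires, at each infoset $I$, only checking $x_i^{(t,k)}(I)$ and, when it is zero, locating the deviation point and copying the already-computed action $x_i^{Ja}(\cdot|I)$; this is $\poly(|H|)$ per strategy.

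The main obstacle is the conditional-reach computation in step (i): a priori, conditioning on a deviating player's play inside an arbitrary correlated profile could be intractable, and it is precisely the mixture-of-small-support-products representation that rescues us, by reducing each conditional expectation to a ratio of quantities that factor across players within each component. The remaining points are routine and I would verify them explicitly: that $f_i(x_i)$ reaches exactly the same infosets as $x_i$ and agrees with it there (so that $f_i(x_i)$ is a legitimate pure strategy, outcome equivalence holds, and the reached-infoset bookkeeping is consistent), and that replacing each $x_i^{(t,k)}$ by $f_i(x_i^{(t,k)})$ can only merge support atoms, never create new ones, so that $K$ does not grow. Assembling these pieces yields the claimed $\poly(|H|,T,K)$-time algorithm returning an $\eps$-BCE in the same mixture-of-small-support-products format.
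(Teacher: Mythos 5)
Your proposal is correct and takes essentially the same approach as the paper's proof: compute the counterfactual best responses $x_i^{Ia}$ from the conditional opponent reach probabilities $\E_{x \sim \pi}\qty[x_{-i}(z) \mid x_i(Ia) = 1]$ (tractable precisely because $\pi$ is a mixture of products), then apply the replacement of \Cref{th:main} to each pure strategy $x_i^{(t,k)}$ in the support. Your explicit ratio formula for the conditional expectation and the observation that the coordinate-wise map $f_i$ pushes the mixture-of-small-support-products format forward with the same $T$ and $K$ merely spell out details the paper leaves implicit.
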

\begin{proof}
	Follow the proof of \Cref{th:main}. The counterfactual best responses $x_i^{Ia}$ can be computed in polynomial time because one can compute $\E_{x \sim \pi} \qty[ x_{-i}(z) \mid x_i(Ia) = 1]$ for every $z \in Z$ by iterating over the support of $\pi$. Then, for each $x^{(t, k)}_i$, replacing $x^{(t, k)}_i(\cdot|I)$ with $x^{Ja}_i(\cdot|I)$ as directed by \Cref{th:main} is a matter of iterating over the information sets of player $i$ and keeping track of where deviations happen.
\end{proof}
In particular, applying the polynomial-time exact EFCE algorithm of \citet{Jiang11:Polynomial}, we have:
\begin{corollary}
	There is a polynomial-time algorithm that, given an extensive-form game, computes an exact BCE.
\end{corollary}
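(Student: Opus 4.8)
The plan is to compose the polynomial-time exact EFCE algorithm of \citet{Jiang11:Polynomial} with the EFCE-to-BCE conversion of \Cref{th:alg}, taking $\eps = 0$ throughout. First I would run the Jiang--Leyton-Brown algorithm on the input game to obtain an exact ($0$-)EFCE $\pi$. Since that algorithm runs in time $\poly(|H|)$, its output is an object of polynomial size; in particular, like all EFCE algorithms discussed above, it returns $\pi$ as a mixture of products with $T = \poly(|H|)$ components, each component being a (behavioral or pure) product strategy.

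Second, I would coerce $\pi$ into the input format required by \Cref{th:alg}, namely a mixture of small-support products. If the components of $\pi$ are already pure (the special case $K = 1$), this step is vacuous. Otherwise the per-player marginals are behavior strategies, and I would apply \Cref{lem:behavioral} to replace $\pi$ by an outcome-equivalent $0$-EFCE that is a mixture of small-support products with $K \le |H|$ and the same (polynomial) $T$; this preprocessing runs in time $\poly(|H|, T)$.

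Third, I would feed the resulting mixture of small-support products into the algorithm of \Cref{th:alg} with $\eps = 0$. That algorithm runs in time $\poly(|H|, T, K)$ and returns a $0$-BCE---that is, an exact BCE---in the same format. Chaining the three stages, the total running time is $\poly(|H|)$ because $T, K = \poly(|H|)$, which is exactly the claim.

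I expect the only real point requiring care to be the representation-compatibility between the stages, rather than any new equilibrium-theoretic argument: concretely, one must check that Jiang11's output is (or can be efficiently rewritten as) a polynomially-sized mixture of products whose per-player marginals are pure or behavioral, so that \Cref{lem:behavioral} is applicable and $T$ stays polynomial. Once that bookkeeping is discharged, correctness is immediate from \Cref{th:main} and \Cref{th:alg}---the conversion preserves the outcome distribution and maps a $0$-EFCE to a $0$-BCE---so no further obstacle remains.
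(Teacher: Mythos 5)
Your proposal is correct and matches the paper's (implicit) argument exactly: the corollary is obtained by composing the exact EFCE algorithm of \citet{Jiang11:Polynomial} with the conversion of \Cref{th:alg} at $\eps = 0$, using \Cref{lem:behavioral} as the representational bridge---precisely the bookkeeping concern you flag, which the paper also addresses by noting that all known EFCE algorithms output mixtures of behavioral or pure products. Nothing further is needed.
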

To our knowledge, this result was previously unknown, even for $\eps$-approximate BCE, not to mention exact BCE.

\section{Discussion}\label{sec:discussion}
In this section, we discuss a corollary and some caveats to our results and techniques. 
\subsection{Optimal Equilibria}
Our results have immediate implications for the problem of {\em optimizing} over the set of BCEs.
Let $c : Z \to \R$ be any objective function. Call an equilibrium $\pi$ {\em optimal} under objective $c$ if it maximizes $c(\pi) := \E_{x \sim \pi, z \sim x} c(z)$ among all equilibria of the same notion. The following corollary follows immediately from \Cref{th:main}.
\begin{corollary}
    For every objective $c$, the optimal EFCE and the optimal BCE under $c$ have the same objective value.
\end{corollary}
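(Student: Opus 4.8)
The plan is to show that the two optimal values sandwich each other, leaning on the two facts already established: every $\eps$-BCE is an $\eps$-EFCE, and, by \Cref{th:main}, every $\eps$-EFCE is outcome-equivalent to an $\eps$-BCE. The engine behind both directions is that the objective $c(\pi) = \E_{x \sim \pi, z \sim x} c(z)$ depends on $\pi$ only through its induced outcome distribution; consequently, outcome-equivalent profiles have identical objective value.

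First I would record this invariance explicitly: if $\pi$ and $\pi'$ are outcome-equivalent, then $c(\pi) = c(\pi')$, immediately from the definition of $c$ together with the definition of outcome equivalence. Next, for the easy direction, I would use the fact that every $\eps$-BCE is an $\eps$-EFCE, so the feasible set of the BCE optimization is contained in the feasible set of the EFCE optimization. Maximizing the same objective over a larger set can only increase the optimum, so the optimal EFCE value is at least the optimal BCE value.

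For the reverse direction I would take an optimal $\eps$-EFCE $\pi^*$ and apply \Cref{th:main} to obtain an $\eps$-BCE $\pi'$ that is outcome-equivalent to $\pi^*$. By the invariance recorded above, $c(\pi') = c(\pi^*)$, so the optimal BCE value is at least $c(\pi^*)$, which is the optimal EFCE value. Combining the two inequalities gives equality, as claimed. Specializing to $\eps = 0$ recovers the statement for exact equilibria.

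The only point requiring care --- and the step I would expect to be a mild technical wrinkle rather than a genuine obstacle --- is the implicit claim that an optimal EFCE $\pi^*$ is actually attained. This is justified because the set of $\eps$-EFCEs is a compact polytope: it lives inside the simplex $\Delta(X_1 \times \dots \times X_n)$ and is cut out by the finitely many linear regret constraints $R_i(\pi, \phi) \le \eps$, while $c(\pi)$ is linear in $\pi$, so the maximum is achieved on a nonempty compact set. Should one wish to avoid attainment altogether, the identical sandwich applied to suprema works verbatim, since outcome equivalence preserves the objective value along any maximizing sequence.
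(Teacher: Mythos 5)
Your proof is correct and matches the paper's intent: the paper gives no explicit proof, stating only that the corollary follows immediately from \Cref{th:main}, and the intended argument is exactly your sandwich---every BCE is an EFCE (so the EFCE optimum is at least the BCE optimum), while \Cref{th:main} plus the fact that $c(\pi)$ depends only on the outcome distribution gives the reverse inequality. Your added remark on attainment via the linearity of the regret constraints and compactness of the feasible polytope is a sound (if unstated in the paper) technical footnote.
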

Therefore, to compute an optimal BCE, it suffices to compute an optimal EFCE and convert it to a BCE. The conversion can be perfomed in polynomial time by \Cref{th:alg}. As for computing an optimal EFCE, the general problem is NP-hard~\cite{Stengel08:Extensive}, but various algorithms exist for the optimal EFCE problem that have parameterized guarantees~\cite{Zhang22:Optimal} or work in special cases~\cite{Farina20:Polynomial}. Our results therefore imply, up to polynomial factors, algorithms with the same guarantees for optimal BCE.

\subsection{Hindsight Rationality and No-Regret Learning}

\begin{figure*}[t]
\newcommand{\util}[2]{\textbf{\textsf{ {\color{p1color} #1}, {\color{p2color} #2}}}} 
    \centering
    \tikzset{
        every path/.style={-},
        every node/.style={draw},
        infoset1/.style={-, dotted, ultra thick, color=p1color},
        infoset2/.style={infoset1, color=p2color},
        terminal/.style={},
    }
    \forestset{
            default preamble={for tree={
            parent anchor=south, child anchor=north, l=48pt, s sep=24pt
    }},
      p1/.style={
          regular polygon,
          regular polygon sides=3,
          inner sep=2pt, fill=p1color, draw=none},
      p2/.style={p1, shape border rotate=180, fill=p2color},
      parent/.style={no edge,tikz={\draw (#1.parent anchor) to (!.child anchor);}},
      parentd/.style n args={2}{no edge,tikz={\draw[ultra thick, p#2color] (#1.parent anchor) to (!.child anchor); }},
      nat/.style={},
      terminal/.style={draw=none, font=\sf, inner sep=2pt},
      el/.style={edge label={node[midway, fill=white, inner sep=1pt, draw=none] {#1}}},
      d/.style={edge={ultra thick, draw={p#1color}}},
      comment/.style={no edge, draw=none, align=center, font=\tiny\sf},
  }
  \begin{forest}
  [,nat
    [,p1,el={\em MP},tier=3,name=1
      [,p2,el=$H_1$,name=c11
        [\util{2}{0},el=$H_2$,terminal]
        [\util{0}{2},el=$T_2$,terminal]
      ]
      [,p2,el=$H_1$,name=c12
        [\util{0}{2},el=$H_2$,terminal]
        [\util{2}{0},el=$T_2$,terminal]
      ]
    ]
    [,p2,el={\em Coop}
        [\util{2}{2},terminal,el={\em Exit},l=18pt]
        [,p1,el=$P$,tier=3,name=2,s=72pt
          [,p2,el=$H_1$,name=c21
            [\util{1}{1},el=$H_2'$,terminal,name=t1]
            [\util{0}{0},el=$T_2'$,terminal]
          ]
          [,p2,el=$T_1$,name=c22
            [\util{0}{0},el=$H_2'$,terminal]
            [\util{1}{1},el=$T_2'$,terminal,name=t2]
          ]
        ]
    ]
  ]
    \draw[infoset2] (c11) to (c12);
    \draw[infoset2] (c21) to (c22);
    \draw[infoset1] (1) to (2);
    \node[draw=red, fit=(2)(t1)(t2)](box){};
    \node[below left, draw=none] at (box.north east){\textcolor{red}{$S$}};
  \end{forest}
  \caption{A game showing that the EFCE-BCE map in this paper is not surjective (for any tiebreaking method). The root node is a nature node; nature moves uniformly at random. The {\em MP} subtree is the matching pennies game; the {\em Coord} subtree is a coordination game, but P2 has a strictly dominant {\em Exit} action.} \label{fig:surj}
\end{figure*}

So far, this paper has only discussed algorithms that take an already-computed $\eps$-EFCE {\em as input}. However, one possible motivation of notions of correlated equilibria is that uncoupled learning dynamics can reach them in empirical frequency of play. Formally, suppose that $n$ agents play an extensive-form game repeatedly for $T$ rounds. At time $t \in [T]$, each agent $i$ selects a (usually behavior) strategy $\pi^{(t)}_i \in \Delta(X_i)$, which depends only on the players' {\em own} utility functions $u_i(\cdot, \pi^{(\tau)}_{-i})$ for each $\tau < t$. (in particular, not on the other players' utility functions). Then the {\em empirical frequency of play} is the uniform distribution on $\{ \pi^{(1)}, \dots, \pi^{(T)}\}$. 

As stated earlier, there are known uncoupled learning dynamics that approach an $\eps$-EFCE after $T = \poly(|H|, 1/\eps)$ rounds~\cite{Celli20:No}. However, to our knowledge, there is no known learning algorithm whose empirical frequency of play approaches BCE at $\poly(|H|, 1/\eps)$: the earlier algorithms of \citet{Morrill21:Efficient,Song22:Sample}; and \citet{Zhang23:Simple} achieve rate $\poly(b^d, 1/\eps)$, where $b$ is the depth and $d$ is the branching factor of the game, but this is worst-case exponential in the size of the game. Roughly speaking, the reason for the discrepancy is that a deviator seeking a profitable BCE deviation has $\poly(b^d)$ possible decision points (corresponding to each sequence of recommendations it could have seen), whereas a deviator seeking a profitable EFCE deviation only has polynomially many (because, at each infoset $I$, the only possible recommendation histories are the sequences $Ja$ for $J \preceq I$).

One may therefore ask whether \Cref{th:alg} implies the existence of uncoupled learning dynamics that reach BCE in $\poly(|H|, 1/\eps)$ rounds. Unfortunately, this is not the case. \Cref{th:main} (and therefore the algorithm in \Cref{th:alg}) changes player $i$'s strategy $\pi^{(t)}_i$ based on {\em future} strategies $\pi^{(>t)}_{-i}$, because the counterfactual best response $x^{Ia}_i$ depends on {\em all} opponent strategy profiles, not just those played in the past. We leave finding polynomial-time uncoupled learning dynamics for BCE (or proving the nonexistence of such dynamics) as an open question for future research.

\subsection{Stronger Notions of Outcome Equivalence}

The notion of outcome equivalence used throughout the paper so far concerns only the outcome distribution {\em on the equilibrium path of play}. One may ask whether this notion can be strengthened, and what happens to our results under a stricter definition of outcome equivalence. For example, one may consider the following strengthened notion: let us call two profiles $\pi$ and $\pi'$ {\em counterfactually outcome-equivalent} if, for every player $i$ and infoset $I \in \mc I_i$, the counterfactual reach probabilities of every terminal node $z \succ I$ coincide, that is,
\begin{align}
    \E_{x \sim \pi} x_i(z|I) x_{-i}(z) = \E_{x \sim \pi'} x_i(z|I) x_{-i}(z).
\end{align}
This would guarantee, among other things, that the counterfactual utility $\E u_i(x;I)$ is the same under $\pi$ and $\pi'$ at every infoset. Unfortunately but perhaps unsurprisingly, our main result does not hold for this stronger notion of outcome equivalence. Indeed, consider the same game used in \Cref{ex:counterfactual} (\Cref{fig:bots}, right). In that game, there exists an EFCE, namely the pure strategy $(L, R')$, whose counterfactual value at the lower P1 decision point is zero. There cannot be a BCE with this property, because then there would be a beneficial counterfactual deviation at that decision point. 

We define the above notion of counterfactual outcome-equivalence purely for simplicity, as we have been using counterfactual utility throughout the paper. However, the above counterexample would also apply equally well to other possible strengthenings of the notion of equivalence, such as a subgame-perfect notion (\eg, ``the conditional distributions coincide in every proper subgame''). 

\subsection{Surjectivity}

Let $f$ be the map in \Cref{th:main}, that is, $f$ takes as input an $\eps$-EFCE $\pi$ and outputs an outcome-equivalent $\eps$-BCE $f(\pi)$. 
Every BCE {\em outcome distribution} appears as the outcome distribution of some $f(\pi)$: $f$ preserves outcome distributions, so taking $\pi$ to be a BCE with the desired outcome distribution is sufficient. However, one may ask whether the map given by \Cref{th:main} is surjective on the set of all BCEs, not just the set of all outcome distributions---that is, whether every BCE appears as $f(\pi)$. The map $f$ depends on a choice of tiebreaking scheme for the counterfactual best responses $x_i^{Ia}$. In this section, we give a simple counterexample illustrating that, {\em regardless of the tiebreaking scheme}, $f$ cannot be surjective. Consider the game in \Cref{fig:surj}. There exists a BCE of this game in which P1 gets conditional utility $1$ in the subtree $S$, namely the uniform distribution on $(E, H_1, H_2, H_2')$, $(E, H_1, T_2, H_2')$, $(E, T_1, T_2, T_2')$, and $(E, T_1, H_2, T_2')$, that is, P2 perfectly correlates with P1 in $S$. However, this cannot happen in a BCE created by $f$ because such a BCE cannot contain a useful recommendation to P2 in $S$ because P2 must have deviated before reaching $S$.

\subsection{Counterfactual Regret and the Definition of BCE}
We discuss here the choice and consequences of using the {\em counterfactual regret} (\Cref{def:counterfactual}), rather than the usual notion of regret (\Cref{def:regret}), in the definition of BCE. 

The definitions of equilibria used throughout this paper are not new to this paper. Instead, EFCE and BCE are defined by \citet{Stengel08:Extensive} and \citet{Morrill21:Efficient},  respectively. Compared to EFCE, BCE enforces a sort of {\em equilibrium refinement}---not quite subgame perfection, but something resembling it. Further, for no-regret algorithms in particular, using counterfactual regret is fairly natural---indeed, the {\em couterfactual regret minimization (CFR)} family of algorithms~\cite{Zinkevich07:Regret}---as its name suggests---entirely revolves around mimimizing the counterfactual regret, and many of the best equilibrium-finding algorithms for extensive-form games are based on CFR~\cite{Brown19:Solving,Farina21:Faster}.

One may indeed define a notion of equilibrium that is like BCE except that it uses regular regret (\Cref{def:regret}) instead of  counterfactual regret. Let us call this notion {\em full EFCE}. As this choice of name suggests, full EFCE behaves more like EFCE than BCE. Indeed, \citet{Stengel08:Extensive} originally define the full EFCE (although they do not give it a special name), and they then show that full EFCE and EFCE are outcome-equivalent, before using what we define as the EFCE for the remainder of their paper. Intuitively, the outcome equivalence follows from a conversion in which the recommendations in off-path information sets are replaced with arbitrary (uninformative) recommendations (since they are off-path, there is no need to ensure incentive compatibility).  Further, the outcome equivalence between full EFCE and EFCE---unlike the one shown in our paper between BCE and EFCE---also carries over to hindsight rationality, so the polynomial-time no-regret algorithms that converge to EFCE (\eg, \citealt{Celli20:No}) can be easily modified to converge to full EFCE instead. Due to this equivalence, subsequent papers, including all those cited in our paper, have used the same definition of EFCE that we use, as it is simpler. BCE, on the other hand, has no known polynomial-time no-regret dynamics. We leave this as an explicit open problem. 

\section{Conclusions and Future Research}
We have proven the outcome equivalence of extensive-form and behavioral correlated equilibria, and we have given a polynomial-time algorithm for converting one into the other, thus leading to, among other implications, the first algorithm for computing a BCE in polynomial time. 

Perhaps the most relevant question for future research is whether there are uncoupled learning dynamics converging to BCE at rate $\poly(|H|, 1/\eps)$. Resolving this question in either direction would be illuminating. If there are, then the algorithm would somehow overcome the exponential explosion in the number of decision points accessible to the deviator. If there are not, then BCE would be a rare example of a notion of equilibrium for which finding an equilibrium is doable in polynomial time, but not with uncoupled learning dynamics.

More broadly, there remains an interesting open line of research regarding the limits of polynomial-time algorithms for computing equilibria: how tight does one need to make the notion before computing one becomes hard? For example, computing a Nash equilibrium is known to be hard~\cite{Chen09:Settling}. What about the normal-form correlated equilibrium (NFCE)~\cite{Aumann74:Subjectivity} in an extensive-form game, which lies between EFCE and Nash? Is there a polynomial-time algorithm for finding one? Are there polynomial-time uncoupled learning dynamics that converge to one at rate $\poly(|H|, 1/\eps)$?\footnote{A recent simultaneous breakthrough by \citet{Peng23:Fast} and \citet{Dagan23:External}, which appeared after we submitted the present paper, has shown an algorithm whose convergence rate is roughly $|H|^{\tilde O(1/\eps)}$, but the $\poly(|H|, 1/\eps)$ case remains open.} Do these answers change if we instead define and use a {\em counterfactual} notion of NFCE, which would then lie between BCE and a counterfactual notion of Nash equilibrium (in which each player's strategy must be a {\em counterfactual} best response to other players' strategies, that is, each player must be best responding even at infosets $I$ that the player does not play to reach, as long as other players play to reach $I$)? All these questions are, to our knowledge, open. 

\section*{Acknowledgements}

We thank Amy Greenwald, Yu Bai, and Hugh Zhang for very helpful discussions. This material is based on work supported by the Vannevar Bush Faculty
Fellowship ONR N00014-23-1-2876, National Science Foundation grants
RI-2312342 and RI-1901403, ARO award W911NF2210266, and NIH award
A240108S001.

\bibliography{dairefs}
\end{document}